\newcommand\cyr{%
 \renewcommand\rmdefault{wncyr}%
 \renewcommand\sfdefault{wncyss}%
 \renewcommand\encodingdefault{OT2}%
\normalfont\selectfont} \DeclareTextFontCommand{\textcyr}{\cyr}
\newtheorem{theorem}{Theorem}
\newtheorem{lemma}[theorem]{Lemma}
\newtheorem{definition}[theorem]{Definition}
\begin{document}

\title{\textbf{The Contest Between Simplicity and Efficiency in Asynchronous Byzantine Agreement}}

\author{Allison Lewko \thanks{Supported by a National Defense Science and Engineering Graduate Fellowship.} \\ University of Texas at Austin
\\ \texttt{alewko@cs.utexas.edu}}

\date{}
\maketitle

\begin{abstract}
In the wake of the decisive impossibility result of Fischer, Lynch, and Paterson for deterministic consensus protocols in the aynchronous model with just one failure, Ben-Or and Bracha demonstrated that the problem could be solved with randomness, even for Byzantine failures. Both protocols are natural and intuitive to verify, and Bracha's achieves optimal resilience. However, the expected running time of these protocols is exponential in general. Recently, Kapron, Kempe, King, Saia, and Sanwalani presented the first efficient Byzantine agreement algorithm in the asynchronous, full information model, running in polylogarithmic time. Their algorithm is Monte Carlo and drastically departs from the simple structure of Ben-Or and Bracha's Las Vegas algorithms.

In this paper, we begin an investigation of the question: to what extent is this departure necessary? Might there be a much simpler and intuitive Las Vegas protocol that runs in expected polynomial time? We will show that the exponential running time of Ben-Or and Bracha's algorithms is no mere accident of their specific details, but rather an unavoidable consequence of their general symmetry and round structure. We define a natural class of ``fully symmetric round protocols" for solving Byzantine agreement in an asynchronous setting and show that any such protocol can be forced to run in expected exponential time by an adversary in the full information model. We assume the adversary controls $t$ Byzantine processors for $t = cn$, where $c$ is an arbitrary positive constant $< \frac{1}{3}$. We view our result as a step toward identifying the level of complexity required for a polynomial-time algorithm in this setting, and also as a guide in the search for new efficient algorithms.
\end{abstract}

\section{Introduction} Byzantine agreement is a fundamental problem in distributed computing, first posed by Pease, Shostak, and Lamport \cite{PSL80}. It requires $n$ processors to agree on a bit value despite the presence of failures. We assume that at the outset of the protocol, an adversary has corrupted some $t$ of the $n$ processors and may cause these processors to deviate arbitrarily from the prescribed protocol in a coordinated malicious effort to prevent agreement. Each processor is given a bit as input, and all good (i.e. uncorrupted) processors must reach agreement on a bit which is equal to at least one of their input bits. To fully define the problem, we must specify the model for communication between processors, the computational power of the adversary, and also the information available to the adversary as the protocol executes. We will work in the message passing model, where each pair of processors may communicate by sending messages along channels. It is assumed that the channels are reliable, but asynchronous. This means that a message which is sent is eventually received (unaltered), but arbitrarily long delays are allowed. We assume that the sender of a message is always known to the receiver, so the adversary cannot ``impersonate" uncorrupted processors.

We will be very conservative in placing limitations on the adversary. We consider the \emph{full information} model, which allows a computationally unbounded adversary who has access to the entire content of all messages as soon as they are sent. We allow the adversary to control message scheduling, meaning that message delays and the order in which messages are received may be maliciously chosen. One may consider a \emph{non-adaptive} adversary, who must fix the $t$ faulty processors at the beginning of the protocol, or an \emph{adaptive} adversary, who may choose the $t$ faulty processors as the protocol executes. Since we are proving an impossibility result, we consider non-adaptive adversaries (this makes our result stronger). We will consider values of $t$ which are $= cn$ for some positive constant $c < \frac{1}{3}$. (The problem is impossible to solve if $t \geq \frac{n}{3}$.) We define the running time of an execution in this model to be the maximum length of any chain of messages (ending once all good processors have decided).

In the asynchronous setting, the seminal work of Fischer, Lynch, and Paterson \cite{FLP83} proved that no deterministic algorithm can solve Byzantine agreement, even for the seemingly benign failure of a single unannounced processor death. More specifically, they showed that any deterministic algorithm may fail to terminate. In light of this, it is natural to consider randomized algorithms with a relaxed termination requirement, such as terminating with probability one. In quick succession following the result of \cite{FLP83}, Ben-Or \cite{BO83} and Bracha \cite{B84} each provided randomized algorithms for asynchronous Byzantine agreement terminating with probability one and tolerating up to $t < \frac{n}{5}$ and $t < \frac{n}{3}$ faulty processors respectively. These algorithms feature a relatively simple and intuitive structure, but suffer greatly from inefficiency, as both terminate in expected exponential time. However, when the value of $t$ is very small, namely $\mathcal{O}(\sqrt{n})$, the expected running time is constant.

This state of affairs persisted for a surprising number of years, until the recent work of Kapron, Kempe, King, Saia, and Sanwalani \cite{K08} demonstrated that polynomial-time (in fact, polylogarithmic time) solutions are possible. They presented a polylogarithmic-time algorithm tolerating up to $(\frac{1}{3}-\epsilon) n$ faulty processors (for any positive constant $\epsilon$) which is Monte Carlo and succeeds with probability $1 - o(1)$ \cite{K10}. The protocol is quite technically intricate and has a complex structure. It subtly combines and adapts several core ingredients: Feige's lightest bin protocol \cite{F99}, Bracha's exponential time Byzantine agreement protocol (run by small subsets of processors) \cite{B84}, the layered network structure introduced in \cite{KSSV06a,KSSV06b}, and averaging samplers.

This protocol is a great theoretical achievement, but its use of samplers in particular would pose a challenge to anyone attempting to implement and use the protocol. The authors note: ``For the use of these samplers in our protocols, we assume either a nonuniform model in which each processor has a copy of the required samplers for a given input size, or else that each processor initializes by constructing the required samplers in exponential time. Alternatively, we could use versions of the efficient constructions given in \cite{GVZ06} at the expense of a polylogarithmic overhead in the overall running time of the protocol" \cite{K10}. There is no proof given for this alternative, and there is no further discussion of how this should be implemented. Also, having hard-coded copies of the samplers for a fixed size (or a small number of sizes) stored in the processors may significantly limit flexibility in practice, as one may want to routinely change the number of processors in the system.

Additionally, it seems quite hard to adapt the techniques of Kapron et al. to obtain a Las Vegas algorithm and/or an algorithm against an adaptive adversary, since their protocol relies heavily on universe reduction to ultimately reduce to a very small set of processors. Once we reduce to considering a small subset of the processors, an adaptive adversary could choose to corrupt the entire subset. Even against a non-adaptive adversary,
there is always some chance that the small subset we ultimately choose will contain a high percentage of faulty processors. This is essentially why the Kapron et al. protocol incurs a (small) nonzero probability of failure. We note there are other techniques that may be useful in the Monte Carlo setting but also seem difficult to adapt to the Las Vegas setting. For example, eliminating processors who send messages which are unlikely to have been sent by good processors may be a successful strategy for a Monte Carlo algorithm, but a Las Vegas algorithm cannot risk eliminating many processors for doing things that may be done by good processors with small probability, since it must avoid incurring a nonzero chance of eliminating too many good processors and failing.

Compared to the protocols of Ben-Or \cite{BO83} and Bracha \cite{B84}, the Kapron et al. protocol \cite{K08,K10} appears to be a distant point in what may be a large landscape of possible algorithms. The full range of behaviors and tradeoffs offered by this space remains to be explored. Many interesting questions persist: is there a Las Vegas algorithm that terminates in expected polynomial time? Is there an expected polynomial time algorithm against an adaptive adversary? Is there a much simpler algorithm that performs comparably to the Kapron et al. algorithm, or at least runs in polynomial time with high probability?

In this work, we investigate why simple Las Vegas algorithms in the spirit of \cite{BO83,B84} cannot deliver expected polynomial running time for linear values of $t$ (i.e. $t = cn$ for some positive constant $c$). More precisely, we define a natural class of protocols which we call \emph{fully symmetric round protocols}. This class encompasses Ben-Or \cite{BO83} and Bracha's protocols \cite{B84}, but is considerably more general. Roughly speaking, a protocol belongs to this class if all processors follow the same program proceeding in broadcast rounds where the behavior is invariant under permutations of the identities of the processors attached to the validated messages in each round. In other words, a processor computes its message to broadcast in the next round as a randomized function of the set of messages it has validated, without regard to their senders. We additionally constrain the protocols in the following way. Whenever a processor chooses its message randomly, it must choose from a \emph{constant number} of possibilities. This means that at each step of the protocol, a processor will make a random choice between at most $R$ alternatives, where $R$ is a fixed constant. Note that the set of alternatives itself can vary; it is only the maximum \emph{number} of choices that is fixed. We give a formal description of fully symmetric round protocols in Section \ref{sec:symmetry}. We will prove that for any algorithm in this class which solves asynchronous Byzantine agreement, there exists some input values and some adversarial strategy which causes the expected running time to be exponential in $n$, when $t =cn$ for any fixed positive constant $c$.

Our general proof strategy is to consider a chain of $E$-round executions (for some suitably large value $E$) where the behavior of some good processors is the same between any two adjacent executions in the chain, and the two ends of the chain must have different decision values. This implies that some execution in the chain must not have terminated within $E$ rounds. This is reminiscent of a strategy often used to prove a lower bound of $t$ rounds for deterministic protocols in the synchronous setting (see \cite{DS82} for example). Employing this sort of strategy for randomized algorithms presents an additional challenge, since any particular execution may be very unlikely. To address this, we consider classes of closely related executions where an adversary is able to exert enough control over a real execution to force it to stay within a chosen class with significant probability.

We view this work not as a primarily negative result, but rather as a guide in the search for new efficient Byzantine agreement algorithms in the asynchronous, full information setting. The goal of this paper is to illuminate some of the obstacles that must be surmounted in order to find an efficient Las Vegas protocol and to spur new thinking about protocols which lie outside the confines of our impossibility result without requiring the full complexity of the Kapron et al. protocol. We hope that the final outcome of this line of research will be interesting new algorithms as well as a greater understanding of the possible features and tradeoffs for protocols in this environment.

\subsection{Other Related Work}
Asynchronous Byzantine agreement has also been studied in the setting where cryptographic primitives are available (for this, the adversary must be assumed to be computationally bounded). Both Rabin \cite{R83} and Toueg \cite{T84} presented solutions in this model, supposing that messages are authenticated by digital signatures and processors share a secret sequence of random bits supplied in advance by a trusted dealer. Both solutions terminate in a small constant number of expected rounds. Assuming private channels between pairs of processors, Berman and Garay \cite{BG93} and Canetti and Rabin \cite{CR93} provided additional solutions. Work in the cryptographic setting has ultimately led to protocols that terminate in constant expected time, have optimal resilience ($t < \frac{n}{3}$) and send $\mathcal{O}(n^2)$ messages (protocols provided by Cachin, Kursawe, and Shoup \cite{CKS05} and Nielson \cite{N02}).

In the synchronous, full-information setting, polylogarithmic round randomized protocols for byzantine agreement against a non-adaptive adversary were given by King, Saia, Sanwalani, and Vee \cite{KSSV06a,KSSV06b}, Ben-Or, Pavlov, and Vaikuntanathan \cite{BPV06}, and Goldwasser, Pavlov, and Vaikuntanathan \cite{GPV06}. Restricting the adversary to be non-adaptive is necessary to achieve polylogarithmic time protocols (for values of $t$ which are linear in $n$), since Bar-Joseph and Ben-Or \cite{BB98} have proven that any randomized, synchronous protocol against a fail-stop, full information adversary who can adaptively fail $t$ processors must require at least $\frac{t}{\sqrt{n\log n}}$ rounds in expectation.

Another lower bound for randomized Byzantine agreement protocols was proven by Attiya and Censor \cite{AC08}, who showed that for each integer $k$, the probability that a randomized Byzantine agreement algorithm tolerating $t$ faults with $n$ processors does not terminate in $k (n-t)$ steps is at least $1/c^k$ for some constant $c$. This bound holds even against a considerably weaker adversary than we are considering.

Recent work of King and Saia \cite{KS09,KS10} has provided Byzantine agreement protocols in the synchronous setting with reduced communication overhead, namely $\tilde{\mathcal{O}}(n^{3/2})$ bits in the full information model against a non-adaptive adversary \cite{KS09}, and $\tilde{\mathcal{O}}(\sqrt{n})$ bits against an adaptive adversary under the assumption of private channels between all pairs of processors \cite{KS10}.

The use of averaging samplers in recent protocols is foreshadowed by a synchronous protocol presented by Bracha \cite{B85} that assigned processors to committees in a non-constructive way. Chor and Dwork \cite{CD89} provide an excellent survey that covers this as well as the other early work we have referenced.

\section{Preliminaries}\label{sec:prelim}
We begin by formally specifying the model and developing a needed mathematical definition.

\subsection{The Asynchronous, Full Information Message Passing Model and Randomized Algorithms}
We consider $n$ processors who communicate asynchronously by sending and receiving messages. We assume that the communication channel between two processors never alters any messages, and that the sender of a message can always be correctly determined by the receiver. To model asynchrony, we follow the terminology of \cite{FLP83}. We suppose there is a \emph{message buffer}, which contains all messages which have been sent but not yet received. A \emph{configuration} includes the internal states of all processors as well as the contents of the message buffer. A protocol executes in \emph{steps}, where a single step between two configurations consists of a single processor $p$ receiving a value $m$ from the message buffer, performing local computation (which may involve randomness), and sending a finite set of messages to other processors (these are placed in the message buffer). We note that the value returned by the message buffer is either a message previously sent to $p$ or $\emptyset$ (which means that no message is received). The only constraint on the non-deterministic behavior of the message buffer is that if a single processor $p$ takes infinitely many steps, then every message sent to $p$ in the message buffer is eventually received by $p$.

We suppose there is an \emph{adversary} who controls some $t$ of the processors. We assume these $t$ processors are fixed from the beginning of the protocol. These will be called the \emph{faulty} processors, while the other processors will be called \emph{good} processors. The faulty processors may behave arbitrarily and deviate from the protocol in malicious ways. The adversary also controls the message scheduling (i.e. it decides which processor takes the next step and what the message buffer returns, subject to the constraints mentioned above). Our adversary is computationally unbounded, and has access to the content of all messages as soon as they are sent. Based on this information, the adversary can adaptively decide in what order to deliver messages, subject only to the constraint that all messages which are sent between good processors must eventually be delivered.

We model the use of randomness in a protocol by allowing each processor to sample from its own source of randomness, which is independent of the sources sampled by other processors and \emph{unpredictable} to the adversary. This means that before a good processor samples from its random source, the adversary will know only the distribution of the possible outcomes and nothing more.

We note that the outcome of a step of the protocol taken by a processor $p$ is determined by the configuration before the step, the message (or $\emptyset$) received by $p$ at the beginning of the step, and the local randomness of $p$. For steps where no randomness is used, the outcome is determined by the prior configuration and the received message only. We let $r_p$ denote the local randomness of processor $p$ sampled during a step, and we refer to $e:=(p,m,r_p)$ as an \emph{event}. If $C$ denotes the current configuration, then $e(C)$ denotes the new configuration resulting from this event. If the message $m$ is either $\emptyset$ or is in the message buffer (and intended for $p$) in the configuration $C$, then we say that $e$ can be \emph{applied} to $C$. We define a \emph{schedule} from $C$ to be a sequence of events that can be applied consecutively, beginning with $C$. We note that for steps involving non-empty randomness $r_p$, the adversary does not have full control over the event: it can only choose the message scheduling, and has no control over the local randomness of a good processor $p$. In fact, when the adversary chooses a processor $p$ to take the next step and the message to deliver, it cannot predict the value of $r_p$ that will be sampled when $p$ is a good processor (before $r_p$ is sampled by $p$, the adversary only knows what distribution it will be sampled from).

\subsection{Adjusting Probability Distributions}\label{sec:distribution} We will constrain our fully symmetric round protocols to always choose the next message randomly via some distribution on at most $R$ possibilities, where $R$ is a fixed constant. We note that the possible messages themselves can change according to the state of the processor as the protocol progresses: it is only the \emph{number} of choices that is constrained, not the choices themselves. Since the probability distributions on $R$ values can be arbitrary, we will define closely related distributions which have more convenient properties for our analysis.

We let $\mathcal{D}$ denote a distribution on a set $\mathcal{S}$ of size at most $R$. We let $\rho_s$ denote the probability that $\mathcal{D}$ places on $s \in \mathcal{S}$. In our proof, we will be considering $t$ samples of such a distribution $\mathcal{D}$. For each $s \in \mathcal{S}$, the expected number of times that $s$ occurs when $t$ independent samples of $\mathcal{D}$ are taken is $\rho_s t$. In general, this may not be a integer. We will prefer to work with integral expectations, so we define an alternate distribution $\widetilde{\mathcal{D}}$ on the same set $\mathcal{S}$. We let $\tilde{\rho}_s$ denote the probability that $\widetilde{\mathcal{D}}$ places on $s$ for each $s \in S$. The definition of $\widetilde{\mathcal{D}}$ is motivated by two goals: we will ensure that $\tilde{\rho}_s t$ is an positive integer for each $s \in S$, and also that $\tilde{\rho}_s$ and $\rho_s$ are sufficiently close for each $s \in \mathcal{S}$.

Since the size of $\mathcal{S}$ is at most $R$, there must exist some $s^* \in S$ such that $\rho_{s^*} \geq \frac{1}{R}$. We fix this $s^*$, and we also fix a small real number $\epsilon > 0$ (whose precise size with respect to $t, R$ will be specified later). For all $s \in \mathcal{S} - \{s^*\}$, we define $\tilde{\rho}_s$ to be the least positive integer multiple of $\frac{1}{t}$ which is $\geq \max \{\rho_s, \epsilon\}$. For $s^*$, we define $\tilde{\rho}_{s^*} = 1 - \sum_{s \in \mathcal{S} - \{s^*\}} \tilde{\rho}_s$.

\begin{lemma}\label{lem:distribution} When $t >  R^2$ and $0<\epsilon < \frac{1}{R^2} - \frac{1}{t}$, $\widetilde{\mathcal{D}}$ is a probability distribution on $\mathcal{S}$, and $\tilde{\rho}_s t$ is a positive integer for each $s \in \mathcal{S}$.
\end{lemma}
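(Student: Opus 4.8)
The plan is to dispatch the integrality and normalization claims almost immediately from the definitions, and then concentrate all the real work on verifying the single inequality $\tilde{\rho}_{s^*} > 0$, which simultaneously guarantees that $\widetilde{\mathcal{D}}$ is a genuine probability distribution and that $\tilde{\rho}_{s^*}t$ is a \emph{positive} integer.

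First I would record the easy parts. For each $s \in \mathcal{S} - \{s^*\}$, $\tilde{\rho}_s$ is by definition a positive integer multiple of $\frac{1}{t}$, so $\tilde{\rho}_s t$ is a positive integer, and moreover $\tilde{\rho}_s \geq \max\{\rho_s, \epsilon\} \geq \epsilon > 0$. The normalization $\sum_{s \in \mathcal{S}} \tilde{\rho}_s = 1$ holds by the very definition of $\tilde{\rho}_{s^*}$. Finally $\tilde{\rho}_{s^*} t = t - \sum_{s \neq s^*} \tilde{\rho}_s t$ is a difference of integers, hence an integer. Thus everything reduces to showing $\tilde{\rho}_{s^*} > 0$, equivalently $\sum_{s \neq s^*} \tilde{\rho}_s < 1$: once this is established, $\tilde{\rho}_{s^*} t$ is an integer that is strictly positive (hence a positive integer), and all the $\tilde{\rho}_s$ are nonnegative and sum to $1$, so $\widetilde{\mathcal{D}}$ is a probability distribution.

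The crux is a rounding estimate. Since consecutive positive integer multiples of $\frac{1}{t}$ differ by $\frac{1}{t}$, the least such multiple that is $\geq \max\{\rho_s, \epsilon\}$ satisfies $\tilde{\rho}_s < \max\{\rho_s, \epsilon\} + \frac{1}{t} \leq \rho_s + \epsilon + \frac{1}{t}$, using $\max\{a,b\} \leq a+b$ for nonnegative $a, b$. Summing over the at most $R-1$ elements of $\mathcal{S} - \{s^*\}$ and using $\sum_{s \neq s^*} \rho_s = 1 - \rho_{s^*} \leq 1 - \frac{1}{R}$ gives
$$\sum_{s \neq s^*} \tilde{\rho}_s < 1 - \frac{1}{R} + (R-1)\left(\epsilon + \frac{1}{t}\right).$$
Now I would substitute the hypothesis $\epsilon < \frac{1}{R^2} - \frac{1}{t}$, i.e. $\epsilon + \frac{1}{t} < \frac{1}{R^2}$, to bound the error term by $(R-1) \cdot \frac{1}{R^2} = \frac{R-1}{R^2} < \frac{1}{R}$. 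Hence $\sum_{s \neq s^*} \tilde{\rho}_s < 1 - \frac{1}{R} + \frac{1}{R} = 1$, which is exactly what is needed.

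The step I expect to carry the weight is this final accumulation of rounding errors: the hypotheses are calibrated precisely so that the slack $\rho_{s^*} \geq \frac{1}{R}$ reserved at $s^*$ strictly dominates the combined round-up across the other (at most $R-1$) coordinates, each of which can overshoot its target by nearly $\epsilon + \frac{1}{t}$. The role of the assumption $t > R^2$ is only to guarantee $\frac{1}{R^2} - \frac{1}{t} > 0$, so that a valid positive $\epsilon$ exists at all; it is used nowhere else.
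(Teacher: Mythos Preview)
Your proposal is correct and follows essentially the same approach as the paper: bound each round-up by $\epsilon + \frac{1}{t}$, sum over the at most $R$ (you use the slightly sharper $R-1$) non-$s^*$ coordinates, and invoke $\epsilon + \frac{1}{t} < \frac{1}{R^2}$ together with $\rho_{s^*} \geq \frac{1}{R}$ to conclude $\tilde{\rho}_{s^*} > 0$. Your streamlining of omitting a separate check that $\tilde{\rho}_s \leq 1$ for $s \neq s^*$ (since nonnegativity plus summing to $1$ already forces this) is a minor but valid simplification of the paper's argument.
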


\begin{proof} By definition of $\tilde{\rho}_{s^*}$, we see that $\sum_{s \in \mathcal{S}} \tilde{\rho}_s = 1$. To show that $\widetilde{\mathcal{D}}$ is a valid distribution, it remains to prove that $0 \leq \tilde{\rho}_s \leq 1$ for every $s \in \mathcal{S}$. For $s \in \mathcal{S} - \{s^*\}$, $\tilde{\rho}_s \geq \epsilon >0$. Also, $\tilde{\rho}_s < \max \{\epsilon, \rho_s\} + \frac{1}{t}$. Since $s \neq s^*$, $\max\{ \epsilon, \rho_s\} \leq 1 - \frac{1}{R}$. Thus, $\tilde{\rho}_s < 1 - \frac{1}{R} + \frac{1}{t}$. Since $t > R$, this quantity is $ < 1$.

For $s^*$, it is clear that $\tilde{\rho}_{s^*} < 1$, since $\sum_{s \in \mathcal{S} - \{s^*\}} \tilde{\rho}_s > 0$. For each $s \in \mathcal{S} - \{s^*\}$, we have $\tilde{\rho}_s - \rho_s < \epsilon + \frac{1}{t}$. Therefore,
\[\sum_{s \in \mathcal{S} - \{s^*\}} \tilde{\rho}_s < \sum_{s \in \mathcal{S} - \{s^*\}} \left(\rho_s + \epsilon + \frac{1}{t}\right).\]
Since the size of $\mathcal{S}$ is at most $R$ and $\sum_{s \in \mathcal{S} - \{s^*\}} \rho_s = 1 - \rho_{s^*}$, we may conclude:
\[\sum_{s \in \mathcal{S} - \{s^*\}} \tilde{\rho}_s < 1 - \rho_{s^*} + R\left(\epsilon + \frac{1}{t}\right).\]
Thus,
\[\tilde{\rho}_{s^*} > \rho_{s^*} - R\left ( \epsilon + \frac{1}{t}\right) >0,\]
since $\epsilon+ \frac{1}{t} < \frac{1}{R^2}$ and $\rho_{s^*} \geq \frac{1}{R}$.
This shows that $\widetilde{\mathcal{D}}$ is indeed a probability distribution on $\mathcal{S}$.

For $s \neq s^*$, $\tilde{\rho}_s t \in \mathbb{Z}$ follows simply from the fact that $\tilde{\rho}_s$ was chosen to be an integral multiple of $\frac{1}{t}$. Since each $\tilde{\rho}_s$ for $s \in \mathcal{S} -\{s^*\}$ is an integral multiple of $\frac{1}{t}$, so is $\tilde{\rho}_{s^*} = 1- \sum_{s \in \mathcal{S} - \{s^*\}} \tilde{\rho}_s$. Hence $\tilde{\rho}_{s^*} t$ is also a positive integer.
\end{proof}

We will additionally use the following consequence of the Chernoff bound. The proof can be found in Appendix \ref{app:chernoff}.
\begin{lemma}\label{lem:probability} Let $\mathcal{D}$ be an arbitrary distribution on a set $\mathcal{S}$ of at most $R$ possible values, and let $\widetilde{\mathcal{D}}$ be defined from $\mathcal{D}$ as above, with $t=cn >  (\frac{2}{c})R^2$ and $\epsilon < \frac{c}{2R^2} - \frac{1}{t}$ (where $c$ is a positive constant satisfying $0 < c < \frac{1}{3}$). Let $s \in \mathcal{S}$, and let $\rho_s, \tilde{\rho}_s$ denote the probabilities that $\mathcal{D}$ and $\widetilde{\mathcal{D}}$ assign to this value, respectively. Let $X_1, \ldots, X_{(1-c)t}$ denote independent random variables, each equal to 1 with probability $\rho_s$ and equal to 0 with probability $1 - \rho_s$. Then:
\[\mathbb{P}\left[ \sum_{i=1}^{(1-c)t} X_i \geq \tilde{\rho_s}t\right] \leq e^{-\delta  c^3 n/(3(1-c))},\]
where $\delta$ is defined to be the minimum of $\epsilon$ and $\frac{1}{4R}$.
\end{lemma}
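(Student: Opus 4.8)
The plan is to reduce the statement to a direct application of the Chernoff bound, so that essentially all of the work lies in quantifying the gap between the threshold $\tilde{\rho}_s t$ and the mean of the sum. Writing $m = (1-c)t$ for the number of summands and $\mu = m\rho_s = (1-c)\rho_s t$ for the mean of $\sum_i X_i$, the target quantity is $\mathbb{P}[\sum_i X_i \geq a]$ with $a := \tilde{\rho}_s t$. The point is that $a$ exceeds $\mu$ by a definite margin: the threshold is computed from $t$ rather than $m$ samples and from the inflated probabilities $\tilde{\rho}_s$, so both the missing factor $(1-c)$ and the rounding-up in the definition of $\widetilde{\mathcal{D}}$ push $a$ above $\mu$. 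I would record the deviation $a - \mu = (\tilde{\rho}_s - (1-c)\rho_s)t$ and set $\gamma := (a-\mu)/\mu$, so that $\gamma\mu = a - \mu$.

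The key quantitative input is a pair of lower bounds on $\tilde{\rho}_s - (1-c)\rho_s$. First I would show $\tilde{\rho}_s - (1-c)\rho_s \geq c\delta$ for every $s$. For $s \neq s^*$ this follows from $\tilde{\rho}_s \geq \max\{\rho_s, \epsilon\}$: if $\rho_s \geq \epsilon$ then $\tilde{\rho}_s - (1-c)\rho_s \geq c\rho_s \geq c\epsilon$, while if $\rho_s < \epsilon$ then $\tilde{\rho}_s - (1-c)\rho_s \geq \epsilon - (1-c)\epsilon = c\epsilon$; since $\delta \leq \epsilon$, both give at least $c\delta$. Second, I would establish a bound proportional to $\rho_s$: for $s \neq s^*$ the inequality $\tilde{\rho}_s \geq \rho_s$ yields $\tilde{\rho}_s - (1-c)\rho_s \geq c\rho_s$ outright.

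The main obstacle is the case $s = s^*$, where $\tilde{\rho}_{s^*}$ may lie below $\rho_{s^*}$, so the clean bound $\tilde{\rho}_s \geq \rho_s$ is no longer available and the deviation must be controlled by hand. Here I would reuse the estimate extracted inside the proof of Lemma~\ref{lem:distribution}, namely $\tilde{\rho}_{s^*} > \rho_{s^*} - R(\epsilon + \tfrac{1}{t})$, together with $\rho_{s^*} \geq \tfrac{1}{R}$ and the hypothesis $\epsilon + \tfrac{1}{t} < \tfrac{c}{2R^2}$. These give $\tilde{\rho}_{s^*} - (1-c)\rho_{s^*} > c\rho_{s^*} - \tfrac{c}{2R} \geq \tfrac{c}{2}\rho_{s^*}$, using $\tfrac{1}{2R} \leq \tfrac12\rho_{s^*}$; this is the $\rho_s$-proportional bound weakened by a factor $\tfrac12$. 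Crucially $\rho_{s^*} \geq \tfrac1R \geq 4\delta$ (since $\delta \leq \tfrac{1}{4R}$), and this slack is exactly what absorbs the lost factor in the final estimate.

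Finally I would split on the size of $\gamma$ and invoke the two standard regimes of the Chernoff bound. When $\gamma \geq 1$, the large-deviation form $\mathbb{P}[\sum_i X_i \geq (1+\gamma)\mu] \leq e^{-\gamma\mu/3} = e^{-(a-\mu)/3}$ (readily derived from the general bound $(e^\gamma/(1+\gamma)^{1+\gamma})^\mu$ since $(1+\gamma)\ln(1+\gamma)-\gamma \geq \tfrac{\gamma}{3}$ for $\gamma \geq 1$) combines with $a - \mu \geq c\delta t$ to give $e^{-c\delta t/3}$; since $c < \tfrac13$ one checks $\tfrac{c\delta t}{3} \geq \tfrac{\delta c^3 n}{3(1-c)}$ (equivalently $\tfrac{c}{1-c}\le 1$), which is the claim. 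When $\gamma \leq 1$, the bound $\mathbb{P}[\sum_i X_i \geq (1+\gamma)\mu] \leq e^{-\gamma^2\mu/3} = e^{-(a-\mu)^2/(3\mu)}$ applies, and I would insert $a - \mu \geq c\rho_s t$ when $s \neq s^*$ and $\rho_s \geq \delta$, $a - \mu \geq c\delta t$ when $s \neq s^*$ and $\rho_s < \delta$, and $a - \mu \geq \tfrac{c}{2}\rho_{s^*}t$ with $\rho_{s^*} \geq 4\delta$ when $s = s^*$. In each subcase the routine simplification collapses to exactly $\tfrac{(a-\mu)^2}{3\mu} \geq \tfrac{\delta c^2 t}{3(1-c)} = \tfrac{\delta c^3 n}{3(1-c)}$, recovering the stated bound (recall $t = cn$, so $c^2 t = c^3 n$). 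The only genuinely delicate points are the $s^*$ estimate and verifying that the constants in these three subcases align with the target exponent rather than losing a factor.
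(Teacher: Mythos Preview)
Your argument is correct and arrives at the same bound, but it is organized differently from the paper's proof. The paper also splits on $s \neq s^*$ versus $s = s^*$, and your treatment of $s^*$ is essentially identical to the paper's (both use $\tilde{\rho}_{s^*} > \rho_{s^*} - R(\epsilon + \tfrac{1}{t})$ together with $\rho_{s^*}\ge \tfrac{1}{R}$ and $\epsilon + \tfrac{1}{t} < \tfrac{c}{2R^2}$ to get a deviation of order $\tfrac{c}{2}\rho_{s^*}$). Where you diverge is for $s \neq s^*$: you keep the original Bernoulli$(\rho_s)$ variables, bound $a-\mu$ from below by both $c\rho_s t$ and $c\delta t$, and then split according to whether $\gamma \ge 1$ or $\gamma \le 1$ (and within the latter, whether $\rho_s \ge \delta$ or $\rho_s < \delta$). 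The paper instead avoids all of this case analysis by a dominance trick: since $\rho_s \le \tilde{\rho}_s$, it passes to auxiliary Bernoulli$(\tilde{\rho}_s)$ variables $\widetilde{X}_i$, observes that the threshold $\tilde{\rho}_s t$ equals $\tfrac{1}{1-c}\,\mathbb{E}\bigl[\sum \widetilde{X}_i\bigr]$, and applies a single Chernoff bound with deviation parameter $\tfrac{c}{1-c}<1$ and mean at least $(1-c)\epsilon t$. Your route is a bit more laborious but more elementary (no coupling step); the paper's is shorter and sidesteps the small/large $\gamma$ dichotomy entirely.

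One minor expositional point: you assert $\tilde{\rho}_s - (1-c)\rho_s \ge c\delta$ ``for every $s$'' but only verify it for $s\neq s^*$ before invoking it in the $\gamma\ge 1$ branch. This is harmless, since your own $s^*$ estimate gives $\tilde{\rho}_{s^*} - (1-c)\rho_{s^*} \ge \tfrac{c}{2}\rho_{s^*} \ge 2c\delta$ (using $\rho_{s^*}\ge 4\delta$), and in fact the $\gamma\ge 1$ branch never arises for $s^*$ anyway because $\gamma \le \tfrac{c}{1-c} < 1$ there; but it would be worth saying one of these things explicitly.
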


\section{Fully Symmetric Round Protocols}
\label{sec:symmetry}
We now define the class of fully symmetric round protocols. In these protocols, communication proceeds in rounds. These are similar to the usual notion of rounds in the synchronous setting, but in the asynchronous setting a round may take an arbitrarily long amount of time and different processors may be in different rounds at any given time.
Our definition is motivated by the core structure of Bracha's protocol, so we first review this structure. Bracha's protocol relies on two primitives, called Broadcast and Validate. The broadcast primitive allows a processor to send a value to all other processors and enforces that even faulty processors must send the same value to everyone or no value to anyone. The Validate primitive essentially checks that a value received via broadcast could have been sent by a good processor (we elaborate more on this below). Bracha describes the basic form of a round of his protocol as follows\footnote{Bracha refers to this as a ``step"  \cite{B84} and uses the terminology of ``round" a bit differently.}:

\vspace*{0.5cm}
\textbf{round(k)}

\hspace*{0.5cm} \textit{Broadcast(v)}

\hspace*{0.5cm} wait till \textit{Validate} a set $S$ of $n-t$ $k$-messages

\hspace*{0.5cm} $v:= N(k,S)$
\vspace*{0.5cm}

Here, a $k$-message is a message broadcast by a processor in round $k$, and $N$ is the \emph{protocol function} which determines the next value to be broadcast ($N$ is randomized). In Bracha's protocol, $N$ considers only the set of $k$-messages themselves, and does not consider which processors sent them. This is the ``symmetric" quality which we will require from fully symmetric round protocols. This structure and symmetry also characterize Ben-Or's protocol \cite{BO83}, except that the Broadcast and Validate primitives are replaced just by sending and receiving. We will generalize this structure by allowing protocol functions $N$ which consider messages from earlier rounds as well.

Fully symmetric round protocols will invoke two primitives, again called Broadcast and Validate. We assume these two primitives are instantiated by \emph{deterministic} protocols. In each asynchronous round, a processor invokes the Broadcast primitive and broadcasts a message to all other processors (that message will be stamped with the round number). We will describe the properties of the broadcast primitive formally below. To differentiate from the receiving of messages (which simply refers to the event of a message arriving at a processor via the communication network), we say a processor $p$ \emph{accepts} a message $m$ when $p$ decides that $m$ is the outcome of an instantiation of the broadcast primitive. When we refer to the \emph{round number of a message}, we mean the round number attached to the message by its sender.
For a fully symmetric round protocol, a round can be described as follows:

\vspace*{0.5cm}
\textbf{round(k)}

\hspace*{0.5cm} \textit{Broadcast(v)}

\hspace*{0.5cm} wait till \textit{Validate} a set $S$ of $n-t$ $k$-messages

\hspace*{0.5cm} let $S'$ denote the set of all validated $i$-messages for all $i < k$

\hspace*{0.5cm} $v := N(k, S\cup S')$
\vspace*{0.5cm}

The message to be broadcast in the first round is computed as $N(0,b)$, where $b$ is the input bit of the processor. As in the case of Bracha's protocol, we consider the set of messages $S \cup S'$ as divorced from the sender identities, so the protocol function $N$ does not consider which processor sent which message. Note here that we have allowed the protocol function to consider all currently validated messages with round numbers $\leq k$ (i.e. were broadcast by their senders in rounds $\leq k$). In contrast, the Validate algorithm may consider the processor identities attached to messages.

In summary, in each round a processor waits to validate $n-t$ messages from other processors for that round. Once this occurs, it applies the protocol function $N$ to the set of validated messages. This protocol function determines whether or not the processor decides on a final bit value at this point (we assume this choice is made deterministically), and also determines the message to be broadcast in the next round. This choice may be made randomly. We note that choice of whether to decide a final bit value (and what that value is) only depends on the set of accepted messages themselves, and does not refer to the senders.

\paragraph{Key Constraint on Randomized Behavior} We constrain a processor's random choices in the following crucial way. We assume that when a processor employs randomness to choose its message to broadcast in the next round, it chooses from at most $R$ possibilities, where $R$ is a fixed, global constant independent of all other parameters (e.g. it does not depend on the round number or the total number of processors)\footnote{This constraint is satisfied by Ben-Or and Bracha's protocols, since both choose from two values whenever they choose randomly.}. Note that the choices themselves may depend on the round number, the total number of processors, etc. The messages themselves may also be quite long - there is no constraint on their bit length.

\paragraph{Full Symmetry} Fully symmetric round protocols are invariant under permutations of the identities associated with validated messages in each round. At the end of each round, a good processor may consult all previously validated messages (divorced from any information about their senders) and must choose a new message to broadcast at the beginning of the next round. It may make this choice randomly, so we think of the set $S \cup S'$ of all previously validated messages as determining a distribution on a \emph{constant} number of possible messages for the next round. We emphasize that since $S \cup S'$ is just the set of the bare messages themselves, it also contains no information about which messages were sent by the same processors, so the distribution determined by $S \cup S'$ is invariant under all permutations of the processor identities associated with messages for each round where the permutations may \emph{differ} per round.


\paragraph{Broadcast and Validate Primitives} We now formally define the properties we will assume for the broadcast and validate primitives. We recall that these are assumed to be deterministic. We first consider broadcast. We suppose that the broadcast primitive is invoked by a processor $p$ in order to send message $m$ to all other processors. We consider the $n$ processors as being numbered 1 through $n$, which allows us to identify the set of processors with the set $[n]:= \{1, 2, \ldots, n\}$. We will assume that for each permutation $\pi$ of the set of $[n]$, there exists a finite schedule of events that can be applied (starting from the current configuration) such that at the end of the sequence of events, all processors have accepted the message $m$ and that for each $i$ from 1 to $n-1$, there is a prefix of the schedule such that at the end of the prefix, exactly the processors $\pi(1), \ldots, \pi(i)$ have accepted the message $m$. Essentially, this means that every possible order of acceptances can be achieved by some applicable schedule. (Note that within these schedules, all processors act according to the protocol.) More formally, we make the following definition:

\begin{definition} We say a broadcast protocol allows \textbf{arbitrary receiver order} if for any processor $p$ invoking the protocol to broadcast a message $m$ and for any permutation $\pi$ of $[n]$, there exists a finite schedule $\sigma_{\pi}$ of events that can be applied consecutively starting from the initial configuration such that there exist prefixes $\sigma_1, \ldots, \sigma_n = \sigma_{\pi}$ of $\sigma_{\pi}$ such that in the configuration resulting from $\sigma_i$, exactly the processors $\pi(1), \ldots, \pi(i)$ have accepted $m$, and no other processors have.
\end{definition}

It is clear that this property holds if one implements broadcast simply by invoking the send and receive operations on the communication network. This property also holds for Bracha's broadcast primitive, which enforces that even faulty processors must send the same message to all good processors or no message at all. This property will be useful to our adversary (who controls scheduling) because it allows complete control over the order in which processors accept messages. We assume that our fully symmetric round protocols treat each invocation of the broadcast primitive as ``separate" from the rest of the protocol in the sense that any messages sent not belonging to an instance of the broadcast primitive do not affect a processor's behavior within this instance of the broadcast primitive.

We consider the Validate primitive as an algorithm $V$ which takes as input the set of all accepted messages so far along with accompanying information specifying the sender of each message. The algorithm then deterministically proceeds to mark some subset of the previously accepted messages as ``validated". We assume that this algorithm is monotone in the following sense. We let $W^+ \subseteq S^+$ be two sets of accepted message and sender identity pairs (we use the $^+$ symbol to differentiate these sets of messages with senders from sets of messages without sender identity attached). Then if a message, sender pair $(m,p) \in W^+$ is marked valid by $V(W^+)$, then this same pair $(m,p)$ will be marked valid by $V(S^+)$ as well. In other words, marking a message as valid is a decision that cannot be reversed after new messages are accepted.

We assume the validation algorithm is called each time a new message is accepted to check if any new messages can now be validated. Bracha's Validate algorithm is designed to validate only messages that could have been sent by good processors in each round. It operates by validating an accepted message $m$ for round $k$ if and only if there are $n-t$ validated messages for round $k-1$ that could have caused a good processor to send $m$ in round $k$ (i.e. $m$ is an output of the protocol function $N$ that occurs with nonzero probability when these $n-t$ validated round $k-1$ messages are used as the input set). In the context of Bracha's algorithm, where the behavior for one round only depends on the messages from the previous round, this essentially requires faulty processors to ``conform with the underlying protocol" \cite{B84} (up to choosing their supposedly random values maliciously) or have their messages be ignored.

In the context of protocols that potentially consider messages from \emph{all} previous rounds, one might use a stronger standard for validation. For instance, to validate a message $m_k$ for round $k$ sent by a processor $p$, one might require that there are messages $m_1, \ldots, m_{k-1}$ for rounds 1 through $k-1$ sent by $p$ which are validated and that there are sets of validated messages $S_1, \ldots, S_{k-1}$ such that each $S_i$ contains messages for rounds $\leq i$ and exactly $n-t$ messages for round $i$, $S_i \subset S_{i+1}$ for each $i < k-1$, and $N(i, S_i) = m_{i+1}$ with non-zero probability for each $i$ from 1 to $k-1$. This essentially checks that there is a sequence of sets of validated messages that $p$ could have considered in each previous round that would have caused a good processor to output messages $m_1, \ldots, m_k$ in rounds 1 through $k$ with non-zero probability.

Roughly, we will allow all validation algorithms that never fail to validate a message $m$ sent by a good processor $p$ when all of the previous messages sent by $p$ and all of the messages that caused the processor $p$ to send the message $m$ have been accepted. We call such validation algorithms \emph{good message complete}. We define this formally as follows.

\begin{definition} A Validate algorithm $V$ is \textbf{good message complete} if the following condition always holds.
Suppose that $S^+_1 \subset S^+_2 \subset \ldots  \subset S^+_k$ are sets of validated messages (with sender identities attached) such that each $S^+_i$ contains exactly $n-t$ round $i$ messages and $m_1, \ldots, m_k$ occur with non-zero probability as outcomes of $N(1,S_1), \ldots, N(k,S_k)$ respectively. Then if a set $W^+$ of messages (with sender identities attached) includes $S^+_k$ as well as the messages $m_1, \ldots, m_k$ from the same sender, then $V(W^+)$ marks $m_k$ as validated.
\end{definition}

This means that if during a real execution of the protocol, a good processor $p$ computes its first $k$ messages $m_1, \ldots, m_k$ by applying $N(1, S_1), N(2, S_2), \ldots, N(k, S_k)$ respectively and another good processor $q$ has accepted $m_1, \ldots, m_k$ from $p$ as well as all of the messages in $S_k$, then $q$ will validate $m_k$.

We note that the use of Validate protocols which are \emph{not} good message complete seems quite plausible in the Monte Carlo setting at least, since a Monte Carlo algorithm can afford to take some small chance of not validating a message sent by a correct processor. In this case, it would also be plausible to consider randomized validation protocols. However, since we are considering Las Vegas algorithms, we will restrict our attention in this paper to deterministic, good message complete validation protocols.

We have now completed our description of fully symmetric round protocols. In summary, they are round protocols that invoke a broadcast primitive allowing arbitrary receiver order, invoke a Validate primitive that is good message complete, are invariant under permutations of the processor identities attached to the messages in each round, and always make random choices from a constant number of possibilities.

\section{Impossibility of Polynomial Time for Fully Symmetric Round Protocols}
We are now ready to state and prove our result.

\begin{theorem}\label{thm:main} For any fully symmetric round protocol solving asynchronous Byzantine agreement with $n$ processors for up to $t = cn$ faults, for $c>0$ a positive constant, there exist some values of the input bits and some adversarial strategy resulting in expected running time that is exponential in $n$.
\end{theorem}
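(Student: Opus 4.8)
The plan is to adapt the classical indistinguishability/chain argument from the synchronous deterministic lower bound (as in \cite{DS82}) to the randomized asynchronous setting, using the full symmetry and the constant-branching constraint to control the randomness. First I would fix a target number of rounds $E$ (to be taken exponential in $n$) and construct a chain of executions $\mathcal{E}_0, \mathcal{E}_1, \ldots, \mathcal{E}_L$ of length $E$ rounds each, where $\mathcal{E}_0$ is an execution in which all good processors should decide $0$ (e.g.\ all inputs $0$) and $\mathcal{E}_L$ is one in which they should decide $1$. By the agreement and validity requirements, the two ends must have different decision values, so somewhere along the chain there must be an execution that has \emph{not} terminated within $E$ rounds. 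The subtlety over the deterministic case is that no single execution occurs with nonzero probability in isolation; so rather than single executions I would work with \emph{classes} of closely related executions, and show that the adversary can force the real execution to remain inside a chosen class with probability bounded below by something like $2^{-O(n)}$, so that an expected running time statement follows.

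The heart of the construction is the round-by-round hybrid. Within a single round, I would use the \textbf{arbitrary receiver order} property of the broadcast primitive to let the adversary completely control which $n-t$ messages each good processor validates first, and hence which set $S \cup S'$ it feeds to the protocol function $N$. Because the protocol is fully symmetric, a good processor's output distribution depends only on the \emph{multiset of bare messages} it has validated, not on the senders; this lets the adversary build two adjacent executions that look locally identical to many good processors even though the global message populations differ. The key constraint that $N$ chooses among at most $R$ possibilities is where Lemma~\ref{lem:distribution} and Lemma~\ref{lem:probability} enter: I would replace the true distribution $\mathcal{D}$ on each processor's choices by the adjusted distribution $\widetilde{\mathcal{D}}$, whose probabilities $\tilde\rho_s$ satisfy $\tilde\rho_s t \in \mathbb{Z}$. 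This integrality lets the adversary aim for an exact target count of each message value among the $(1-c)t$ good processors' broadcasts in a round, and Lemma~\ref{lem:probability} guarantees that with probability at least $1 - R e^{-\delta c^3 n/(3(1-c))}$ the realized counts do not exceed these targets, so the adversary's intended ``balanced'' population is achievable using the $t = cn$ faulty processors to fill in the deficit and present each good processor with a validated set indistinguishable from the neighboring execution.

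I would then argue that adjacent executions in the chain differ only in the behavior dictated to a bounded number of processors, so that for each good processor there is an adjacent pair in which its entire view (the sequence of validated multisets, divorced from senders) is identical, forcing the same decision. Good-message-completeness of Validate is needed here: it ensures that the messages the adversary wants good processors to validate, namely genuine messages from good senders together with their supporting sets, are in fact validated, so the adversary cannot be thwarted by validation failing on legitimate messages. Chaining these local indistinguishabilities across the $L$ steps propagates the decision value from $0$ at one end to $1$ at the other, which is the desired contradiction unless some execution runs past $E$ rounds. Combining the per-round success probability over $E$ rounds, the adversary keeps the real execution inside the chosen non-terminating class with probability at least $(1 - Re^{-\Omega(n)})^{E}$, and choosing $E$ exponential in $n$ (but small enough that this product stays bounded away from $0$) yields expected running time $\ge E \cdot (1-o(1))$, exponential in $n$.

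I expect the main obstacle to be making the chain argument genuinely probabilistic rather than deterministic: I must define the execution classes so that (i) adjacency still yields per-processor view-identity despite randomness, and (ii) the adversary, who cannot predict good processors' coins, can nonetheless steer the realized message counts into the target profile with the claimed probability. This requires carefully coupling the adversary's scheduling decisions in each round to the $\widetilde{\mathcal{D}}$-target counts and applying Lemma~\ref{lem:probability} as a union bound over the at most $R$ message values and over all $E$ rounds, while ensuring the $2^{-O(n)}$ losses do not accumulate to kill the lower bound. Getting the quantifier order right --- fixing the class first, then lower-bounding the probability that the random execution lands in it --- is the delicate point, since the adversary's later choices depend on earlier realized randomness.
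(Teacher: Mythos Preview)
Your high-level strategy matches the paper's almost exactly: a chain of $E$-round execution \emph{classes} from all-$0$ inputs to all-$1$ inputs, with adjacent classes indistinguishable to some good processors, combined with the adjusted distributions $\widetilde{\mathcal{D}}$ and Lemma~\ref{lem:probability} so the adversary can steer a real execution into the undecided class $\mathcal{C}_{\ell^*}$ with constant probability for $E$ exponential in $n$.

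The one concrete mechanism you are missing is the \emph{group lockstep} structure. The paper partitions the $n$ processors into $n/t$ groups of size exactly $t$, each containing $ct$ faulty processors, and the adversary schedules so that all $t$ processors in a group always feed the \emph{same} set $S_i^j$ to $N$. This is what makes your two desiderata compatible: (a) within a group everyone samples from the same $\mathcal{D}$, so Lemma~\ref{lem:probability} applies to the $(1-c)t$ good members and the $ct$ faulty members can top up each count to exactly $\tilde\rho_s t$; and (b) the chain is built so that adjacent classes differ only in the $S_i^j$'s of a \emph{single group}, hence all good processors outside that group have identical views. Your phrase ``differ only in the behavior dictated to a bounded number of processors'' is not quite right---the changed group has $t=cn$ processors, not $O(1)$---and without the group structure it is unclear how you would simultaneously get all good processors sampling from a common distribution (needed for the count argument) while also being able to perturb a subset without touching the others' views (needed for the chain). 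The paper then builds the chain via a recursive scheduling procedure (\textbf{ChainGenerator}) that, to flip one group's inputs, first isolates that group by rerouting which group's messages each other group omits in each round; this is where arbitrary-receiver-order and good-message-completeness are actually exercised. Once you add the group/lockstep idea, the rest of your outline---including the quantifier order you flag---goes through as in the paper.
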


\begin{proof} We suppose we have a fully symmetric round protocol with resilience $t = cn$. We will assume that $t$ divides $n$ for convenience, and also that $ct$ is an integer. These assumptions will make our analysis a little cleaner, but could easily be removed.
We let $R$ denote the constant bound on the number of possibilities for each random choice. $E$ will denote a positive integer, the value of which will be specified later. (It will be chosen as a suitable function of $c, R$ and $n$, and will be exponential in $n$ when $c$, $R$ are positive constants.)

We will be considering partial executions of the protocol lasting for $E$ rounds. For convenience, we think of our protocol as continuing for $E$ rounds even if all good processors have already decided (this can be artificially achieved by having decided processors send default messages instead of terminating). The adversary must fix $t$ faulty processors at the beginning of an execution. Once these processors are fixed, we divide the $n$ processors into disjoint groups of $t$ processors each, so there are $\frac{n}{t}$ groups. We will refer to the groups as $G_1$ through $G_{n/t}$. We choose our groups so that exactly $ct$ of the processors in each group are faulty. The main idea of our proof is as follows. Since the broadcast and validation primitives essentially constrain the behavior of faulty processors, we think of the adversary as controlling only the (supposedly) random choices of faulty processors as well as the message scheduling. This means that when faulty players invoke the randomized function $N(i,S)$, they may maliciously chose any output that occurs with nonzero probability. In all other respects, they will follow the protocol.

The adversary will choose the message scheduling so that the $t$ processors in a single group will proceed in lockstep: the sets of messages that they use as input to $N$ will always be the same in each round. This means that all $t$ processors in a group will be choosing their next message from the \emph{same distribution}. Since there are only a constant number of possibilities and the adversary controls a constant fraction of the processors in the group, it can ensure with high probability that the collection of messages which are actually chosen is precisely equal to the expectation under the adjusted distribution. More precisely, we let $\mathcal{D}$ denote the distribution (on possible next round messages) resulting from applying $N$ to a particular set of messages in a particular round, and we let $\mathcal{S}$ denote the set of (at most $R$) outputs that occur with nonzero probability. We define $\widetilde{D}$ with respect to $\mathcal{D}$ as in Section \ref{sec:distribution}. Then, with high probability, once the adversary sees the outputs chosen by the $(1-c)t$ good processors in the group, it can choose the messages of the $ct$ faulty processors in the group so that the total number of processors in the group choosing each $s \in \mathcal{S}$ is exactly $\tilde{\rho}_s t$. (This is proven via Lemma \ref{lem:probability}.)

We can then consider classes of executions which proceed with these groups in lockstep as being defined by the set of messages used as input to $N$ in each round by each group (as well as the sets of sender, round number pairs for the messages, but these pairs are divorced from the messages themselves). With reasonable probability, an adversary who controls the message scheduling and the random choices of faulty processors can force a real execution to stay within such a class for $E$ rounds. We will prove there exists such a class in which some good processors fail to decide in the first $E$ rounds. Putting this all together, we will conclude that there exist some values of the input bits and some adversarial strategy that will result in expected running time that is exponential in $n$.

Our formal proof begins with the following definition.

\begin{definition}\label{def:class} An $E$-round lockstep execution class $\mathcal{C}$ is defined by a setting of the input bits for each group (processors in the same group will have the same input bit), message sets $S_i^j$ for  all $1 \leq i \leq E$ and $1 \leq j \leq n/t$ where $S_i^j$ is used as the input to $N$ in round $i$ by each processor in group $G_j$ during some real execution, and sets $Z_i^j$ of processor, round number pairs consisting of all pairs $(p,k)$ such that the message broadcast by processor $p$ in round $k$ is contained in $S_i^j$. We require that for all $i,j$, if $(p,k) \in Z_i^j$ and processors $p$ and $p'$ are in the same group $G_{j'}$, then $(p',k) \in Z_i^j$ as well.
\end{definition}

We have required that an $E$-round lockstep execution class $\mathcal{C}$ describe \emph{some real execution}, but note that such an execution is not unique. There are many possible executions that correspond to the same class $\mathcal{C}$. It is crucial to note that the messages in sets $S_i^j$ are not linked to their sender, round number pairs in $Z_i^j$. In other words, given the sets $Z_i^j$ and $S_i^j$, one has cumulative information about the messages and the senders, but there is no specification of who sent what.

We will construct a chain $\mathcal{C}_0, \mathcal{C}_1, \ldots, \mathcal{C}_L$ of $E$-round lockstep execution classes with the following properties:
\begin{enumerate}
    \item For each $S_i^j$ in each $\mathcal{C}_\ell$, the number of occurrences of each message $s$ is exactly equal to $\tilde{\rho}_s t$, where $\tilde{\rho}_s$ denotes the probability on $s$ in the distribution $\tilde{\mathcal{D}}$ defined from $\mathcal{D}$ as in Section \ref{sec:distribution}, where $\mathcal{D}$ is the distribution on possible messages induced by $N(i-1, S_{i-1}^j)$.
    \item Each $\mathcal{C}_\ell$ and $\mathcal{C}_{\ell+1}$ differ only in the sets $S_i^j$ for \emph{one} group $G_j$.
    \item It is impossible for any good processor to decide the value 1 during an execution in class $\mathcal{C}_0$.
    \item It is impossible for any good processor to decide the value 0 during an execution in class $\mathcal{C}_L$.
\end{enumerate}

Once we have such a chain of $E$-round lockstep execution classes, we may argue as follows. Since each $\mathcal{C}_\ell$ and $\mathcal{C}_{\ell+1}$ differ only in the behavior of processors in a single group, it is impossible for all good processors to have decided 0 in an execution in class $\mathcal{C}_\ell$ and all good processors to have decided 1 in an execution in class $\mathcal{C}_{\ell+1}$. Since the only decision value possible in $\mathcal{C}_0$ is 0 and the only decision value possible in $\mathcal{C}_L$ is 1, there must be some $\mathcal{C}_{\ell^*}$ which leaves some good processors undecided. In other words, any execution in this class $C_{\ell^*}$ does not terminate in $\leq E$ rounds.

Finally, we will show that when the input bits match the inputs for $\mathcal{C}_{\ell^*}$, the adversary can (with some reasonable probability) cause a real execution to fall in class $\mathcal{C}_{\ell^*}$. Since $E$ is exponential in $n$ whenever $R, c$ are positive constants, this will prove that the expected running time in this case is exponential.

We now present a recursive algorithm which generates a chain of $E$-round lockstep execution classes with the properties required above.

\subsection{Generating the Chain of Execution Classes}
We first describe a method for generating an $E$-round lockstep execution class from a setting of the input bits (these must be the same for all processor within a group) and a family of sets $z_i^j\subset [n]$, each of size $n-t$, where $i$ ranges from 1 to $E$ and $j$ ranges from 1 to $\frac{n}{t}$. Each set $z_i^j$ will be the complement of some group $G_{j'}$. This means that $z_i^j$ is a union of all but one of the groups, so each group of $t$ processors is either contained or $z_i^j$ or disjoint from it. Now, we will create an execution where for each round $i$ and each group $G_j$, the $n-t$ round $i$ messages validated by the members of group $G_j$ in round $i$ are precisely those sent by the processors in the set $z_i^j$. This will correspond to an $E$-round lockstep execution class with sets $Z_i^j$ defined as follows.

We define the sets $Z_i^j$ inductively. Each set $Z_i^j$ consists of pairs $(p,k)$, where $p \in [n]$ is a processor and $k \leq i$ is a round number. For $i=1$, we simply define $Z_1^j$ to consist of the pairs $(p,1)$ where $p \in z_1^j$. Once we have defined $Z_{i-1}^j$, we define $Z_i^j$ to be $Z_{i-1}^j$ plus the sender and round number pairs for any additional messages which may be needed to validate the round $i$ messages of the processors in set $z_i^j$. More formally,
\begin{equation}\label{setbuild}
Z_i^j := Z_{i-1}^j\; \bigcup \;\{(p,k) | p \in z_i^j, k\leq i\} \bigcup_{j' s.t. G_{j'} \cap z_i^j \neq \emptyset} Z_{i-1}^{j'}.
\end{equation}
The second set in this union corresponds to the set of messages sent by processors in $z_i^j$ for all rounds $\leq i$, while the final set in the union contains all the sets $Z_{i-1}^{j'}$ for groups $G_{j'}$ that intersect $z_i^j$. By good message completeness of our validation algorithm, this suffices to ensure that the round $i$ messages from senders in $z_i^j$ will be validated once all of the messages with sender, round number pairs in $Z_i^j$ have been accepted (in fact all of the messages whose sender, round number pairs appear in $Z_i^j$ will be validated, assuming all processors follow the protocol except for manipulating their supposedly random coins). We note that these sets $Z_i^j$ satisfy the required property that if $(p,k) \in Z_i^j$, $(p',k) \in Z_i^j$ as well as for any $p'$ in the same group as $p$. (This follows from induction on $i$ and the fact that this is holds for the sets $z_i^j$.)

We next use the sets $Z_i^j$ to define a set of permutations (one for each processor, round number pair) that we will use to specify the message scheduling during an execution.

\begin{definition}\label{def:permutations} A set of permutations $\{\pi_{p,i}\}$ on $[n]$ (one for each processor $p$ and each round $1 \leq i \leq E$) corresponding to sets $\{Z_i^j\}$ is defined as follows.
For each processor $p$ and round number $i$, we consider each group index $j$ (from $1$ to $n/t$). For each group $G_j$, there is a minimal round $k$ for which $(p,i) \in Z_k^j$ (if $(p,i)$ is not in any of these sets, then define this minimal $k$ to be $\infty$). This induces an ordering on the groups. We will define $\pi_{p,i}$ by setting $\pi_{p,i}(1), \ldots, \pi_{p,i}(t)$ to be processors in the group with the lowest associated minimal round $k$, setting $\pi_{p,i}(t+1), \ldots, \pi_{p,i}(2t)$ to be processors in the group with the second lowest associated minimal round $k$, and so. (Within a group, the processors can be ordered arbitrarily. If two groups have the same $k$, their order can also be chosen arbitrarily.)
\end{definition}

These permutations $\pi_{p,i}$ will be used to specify the order in which the processors will accept the message broadcast by processor $p$ in round $i$. Note that for any set of permutations $\{\pi_{p,i}\}$, we can choose our message scheduling to achieve them, since our broadcast protocol allows arbitrary receiver order and separate invocations of it in our fully symmetric round protocols are treated independently of each other.

We now consider running an execution as follows. In the first round, each processor in group $G_j$ will choose its message to broadcast from the same distribution $\mathcal{D}_{init}$ over $\leq R$ possibilities induced by $N(0,b_j)$, where $b_j$ is the input bit of processors in group $G_j$
(we are assuming that the input bit is the same for all processors within a group, though different groups may have different inputs). We let $\widetilde{\mathcal{D}}_{init}$ be defined from $\mathcal{D}_{init}$ as in Section \ref{sec:distribution}. We choose a multi-set of $t$ messages from the $\leq R$ possible messages such that the number of occurrences of each is equal to $t$ times its probability under $\widetilde{\mathcal{D}}_{init}$. (This is always possible because $t$ times each probability in $\widetilde{\mathcal{D}}_{init}$ is a positive integer, by design.) We assign these messages arbitrarily to the processors in group $G_j$. (This can be thought of as assigning the random coins of each processor to make these outcomes happen.) For each $p$, we then run part of the finite schedule $\sigma_{\pi_{p,1}}$ for $p$'s broadcast in round 1, stopping when exactly those processors in groups $j'$ such that $(p,1) \in Z_1^{j'}$ have accepted the message. At this point, all processors in each group $G_j$ are ready to compute their round 2 messages by calling $N(1,S_1^j)$, where $S_1^j$ contains the messages whose sender, round number pairs appear in $Z_1^j$. We can now describe the rest of the execution by specifying what happens for an arbitrary round $i$, assuming the previous round is just completing.

We assume (by induction) by that at the end of round $i-1$ when the message to be broadcast in round $i$ is computed, all processors in each group $G_j$ have accepted and validated exactly the set of messages $S_{i-1}^j$ corresponding to the sender, round number pairs in $Z_{i-1}^j$. Hence, every processor in group $G_j$ will be computing its round $i$ message by applying $N(i-1,S_{i-1}^j)$ for the same set $S_{i-1}^j$ of accepted messages. We let $\mathcal{D}_{i-1,j}$ denote the probability distribution on the $ \leq R$ possible outputs of $N(i-1,S_{i-1}^j)$, and we let $\widetilde{\mathcal{D}}_{i,j}$ be defined from $\mathcal{D}_{i,j}$ as in Section \ref{sec:distribution}. We choose a multi-set of $t$ messages from the possible outcomes of $N(i-1,S_{i-1}^j)$ such that the number of occurrences of each is equal to $t$ times its probability under $\widetilde{\mathcal{D}}_{i-1,j}$. We assign these messages arbitrarily to the members of group $G_j$.

Now, each of the processors invokes the broadcast protocol on its round $i$ message. We begin running each corresponding finite schedule $\sigma_{\pi_{p,i}}$, stopping at the point where exactly those processors in groups $j$ such that $(p,i) \in Z_i^{j}$ have accepted the message. Also, for every processor $p$ and every round $k < i$, we continue running the finite schedules $\sigma_{\pi_{p,k}}$ to the point where exactly all processors in groups $j$ such that $(p,k) \in Z_i^j$ have accepted the round $k$ message from $p$. This ensures that every processor in each group $G_j$ will accept and validate precisely the set of messages $S_i^j$ whose sender, round number pairs appear in $Z_i^j$. We continue in this way through $E$ rounds.
This is a real execution that corresponds to an $E$-round execution class with sets $Z_i^j, S_i^j$.

To generate our chain of executions, we begin by defining the initial $\mathcal{C}_0$. This is done by setting all of the input bits equal to 0, and choosing sets $z_i^j \subset [n]$ arbitrarily (each is the complement of a single group). The sets $Z_i^j, S_i^j$ are then derived as above. This gives us an $E$-round lockstep execution class $C_0$ corresponding to some real execution in which all of the input bits are 0. (This ensures property 3 for our chain).

To produce the rest of the chain $C_1, \ldots, C_L$, we employ a recursive algorithm called \textbf{ChainGenerator}.
The algorithm is designed to produce a gradual shift from a lockstep execution class with all input bits equal to 0 to a lockstep execution class with all input bits equal to 1. This is accomplished by changing the inputs of one group at a time. In order to change a single group's inputs without affecting the behavior of other processors through the first $E$ rounds, we must first move to a lockstep execution class where the messages sent by this group are not accepted by processors in other groups until \emph{after} they have completed $E$ rounds. We choose the group size to be $t$ in order to make this possible. (Group sizes $< t$ could also be employed, but once the group size gets too small, the adversary will not have enough control to make the group's cumulative behavior match its adjusted expectation.)

To reach an $E$-round lockstep execution class where a particular group's messages are not heard by other processors, we follow an inductive strategy with round $E$ acting as the base case. Suppose that we want to change the inputs for processors in group $G_j$. We cannot do this immediately if it might affect the behavior of processors outside this group in the first $E$ rounds. We define $i-1$ to be the \emph{earliest} round in which the set $z_{i-1}^{j'}$ for some other group $G_{j'}$ includes a sender in $G_j$. We now seek to change the set $z_{i-1}^{j'}$ to be the complement of group $G_{j}$.
Now we have a new instance of the same problem: in order to change what messages group $G_{j'}$ members accept in round $i-1$ without affecting processors outside of this group, we must first get to a lockstep execution class where processors outside of group $G_{j'}$ do not accept messages sent from group $G_{j'}$ with round numbers $\geq i$ until they have completed $E$ rounds. The important thing to notice here is that the new instance of the problem always involves a \emph{higher} round number. Hence, we can formulate this as a recursion, and eventually we reach a point where it is enough to ensure that the messages of some group $G_{j''}$ with round numbers $\geq i'$ are not heard by some other group $G_{j'''}$ in round $E$. This is now easy to do, since we can arrange for the $n-t$ other round $E$ messages to be validated while we delay the messages with round numbers $\geq i'$ from group $G_{j''}$ to $G_{j'''}$, so the processors in group $G_{j'''}$ can exit round $E$. (Notice here that $E$ will be the earliest round in which any group may receive messages with round number $\geq i'$ from $G_{j''}$, and this ensures that these messages cannot be needed to validate the round $E$ messages of processors outside $G_{j''}$.)

More concretely, to change the inputs of some group $G_{g_1}$ from 0 to 1, we begin by initializing a list of group number, round number pairs with the element $(g_1, 1)$. Having a pair $(g_\ell, r_\ell)$ as the last element of our list means that our goal is to arrive at sets $z_i^j$ such that for all $j\neq g_\ell$ and all $i \geq r_\ell$, $z_i^j$ is the complement of group $G_{g_\ell}$. (In other words, the messages that group $G_{g_\ell}$ sends in rounds $\geq r_\ell$ are not heard by processors in other groups.) If our sets $z_i^j$ do not currently satisfy this, we add a pair $(g_{\ell+1}, r_{\ell+1})$ to the list where $r_{\ell+1}-1$ is the minimal value of $i\geq r_\ell$ such that some $z_i^j$ with $j \neq g_\ell$ includes group $G_{g_\ell}$ (and $g_{\ell+1}$ is the corresponding $j$ value for such a set $z_i^j$). Now our (sub)goal is to arrive at sets $z_i^j$ where the messages sent by group $G_{g_{\ell+1}}$ in rounds $\geq r_{\ell+1}$ are not heard by processors in other groups. Once this holds, we can change the set $z_{r_{\ell+1}-1}^{g_{\ell+1}}$ to be the complement of group number $g_{\ell}$, and we can remove the pair $(g_{\ell+1}, r_{\ell+1})$ from the list. Now, there may be other sets $z_i^j$ with $i \geq r_\ell$ and $j \neq g_\ell$ including group $G_{g_\ell}$ that we will need to deal with next. However, since we always consider the \emph{minimal} such $i$, we will not undue the progress we have made by changing $z_{r_{\ell+1}-1}^{g_{\ell+1}}$ in the process of addressing these other sets. Since there are a finite number of groups and we are considering a finite number of rounds, this process will always eventually terminate.

The full description of the recursive function and the proof that it produces a suitable chain of $E$-round lockstep execution classes is below. The function takes in three arguments: a specification of $n$ input bits (denoted $x_1, \ldots, x_n$), a family of sets $\{z_i^j\}$, and an ordered list $\mathcal{L}$ of pairs: each pair contains a group number and a round number between 1 and $E+1$. We denote the $k^{th}$ element of the list by $(g_k, r_k)$, where $g_k$ is the group number, and $r_k$ is the round number. The round numbers in the list will always be strictly increasing. The first element of the list will always be of the form $(g_1, 1)$. We denote the size of the list by $|\mathcal{L}|$. The input bits $x_1, \ldots, x_n$ will always be consistent within each group.

There is also a required relationship between the ordered list and the sets $z_i^j$. For each pair $(g_{k-1}, r_{k-1})$ on the list that is followed by a pair $(g_k, r_k)$, round $r_{k}-1$ must be the earliest round $\geq r_{k-1}$ in which any set $z_i^j$ for a group $G_j \neq G_{g_{k-1}}$ includes the group $G_{g_{k-1}}$. In other words, for all $r_{k-1} \leq i < r_k -1 $ and all $j \neq g_{k-1}$, $z_i^j$ is is the complement of group $G_{g_{k-1}}$.
This relationship will be maintained in the arguments to the recursive calls the function makes to itself.

We initially call our recursive function \textbf{ChainGenerator} with input bits all equal to 0, the sets $z_i^j$ used in defining $\mathcal{C}_0$, and the list initialized to $(1,1)$. The function then proceeds to call itself with new arguments.
Each time a change is made to the input bits and/or to the sets $z_i^j$, we produce a new execution class generated as above from the new input bits and sets.

\subsubsection{The Recursive Algorithm}

\paragraph{ChainGenerator($(x_1, \ldots, x_n), \{z_i^j\}, \mathcal{L}$)} We set $\ell = |\mathcal{L}|$. If $\ell \geq 2$, we proceed as follows.
We examine the last element of the list $\mathcal{L}$, denoted by $(g_\ell, r_\ell)$. We consider two possible cases. Case 1 occurs when none of the sets $z_i^j$  for values of $j \neq g_\ell$ and $r_\ell \leq i \leq E$ include group $g_\ell$. Case 2 occurs when there is some $z_i^j$ for $j \neq g_\ell$, $r_\ell \leq i \leq E$ that does contain group $g_\ell$.

We first consider case 1.
We define new sets $\tilde{z}_i^j$ as follows. For all $j \neq g_\ell$, we set $\tilde{z}_i^j = z_i^j$ for all $i$ (these sets are unchanged). For $j= g_\ell$, we set $\tilde{z}_i^{g_\ell} = z_i^j$ for all $i \neq r_\ell-1$. We define $z_{r_\ell-1}^{g_\ell}$ to be the complement of group $g_{\ell-1}$.
These new sets are then used (as above) to derive sets $\tilde{Z}_i^j, \tilde{S}_i^j$ corresponding to an $E$-round execution class $\tilde{\mathcal{C}}$, using the (unchanged) input bits $(x_1, \ldots, x_n)$.
We output $\tilde{\mathcal{C}}$ as the next $E$-round lockstep execution class in the chain. We remove $(g_\ell, r_\ell)$ from the list $\mathcal{L}$ to form a new list $\tilde{\mathcal{L}}$. We then call \textbf{ChainGenerator}($(x_1, \ldots, x_n), \{\tilde{z}_i^j\}, \tilde{\mathcal{L}}$).

We observe the following. We know that round number $r_\ell -1$ is the minimum of all round numbers $i\geq r_{\ell-1}$ such that some $z_i^j$ for $j \neq g_{\ell-1}$ includes group $g_{\ell-1}$. (This follows from the required relationship between the ordered list $\mathcal{L}$ and the sets $z_i^j$.) Thus, when we create the new sets $\tilde{Z}_i^j$ from the new $\tilde{z}_i^j$'s, the set $\tilde{Z}_{r_\ell-1}^{g_\ell}$ will no longer include any sender, round numbers with senders in group $g_{\ell-1}$ and round numbers $\geq r_{\ell-1}$. This ensures that the members of group $g_\ell$ can now proceed through round $r_\ell-1$ without accepting any messages from group $G_{g_{\ell-1}}$ with round numbers $\geq r_{\ell-1}$.

To confirm that our constraints on the input arguments are satisfied, note that $\tilde{\mathcal{L}}$ is a sublist of $\mathcal{L}$, so its round numbers remain strictly increasing. Also, we have only changed the sets $z_i^j$ in rounds $i \geq r_{\ell}-1$, so if $\tilde{\mathcal{L}}$ still has size at least two\footnote{Note that this constraint is vacuous when $|\tilde{\mathcal{L}}| = 1$.}, we have preserved the fact that round $r_{\ell-1}-1$ is the earliest round in which any set $\tilde{z}_i^j$ for $j\neq g_{\ell-2}$ and $i \geq r_{\ell-2}$ includes group number $g_{\ell-2}$. (Note that $r_{\ell-1}-1 < r_\ell -1$.)

We now consider case 2. In this case, there is some $j \neq g_\ell$, $r_\ell \leq i \leq E$ such that $z_i^j$ \emph{does} contain group $g_\ell$. Among these $i,j$ values, we fix a pair $(i^*, j^*)$ where $i^*$ is minimal. We define $r_{\ell+1} = i^* + 1$. We note that $r_{\ell+1} > r_\ell$. We define $g_{\ell+1}=j^*$. We append the pair $(g_{\ell+1}, r_{\ell+1})$ to the list $\mathcal{L}$ to form the new list $\mathcal{L}'$. We then call \textbf{ChainGenerator($(x_1, \ldots, x_n),\{z_i^j\}, \mathcal{L'}$)}. Note that in this case, the input bits and the sets $\{z_i^j\}$ are unchanged, so $\{z_i^j\}, \mathcal{L'}$ still satisfy our requirements by construction.

We are left to handle the case of $\ell = 1$. In this case, we have a single pair $(g_1, 1)$ in the list. We again consider two cases. In case 1, none of the sets $z_i^j$ for $j \neq g_1$ and $1 \leq i \leq E$ include the group $g_1$. In case 2, there is some $z_i^j$ for $j\neq g_1$ that does include a pair with sender in group $g_1$.

We consider case 1. We first change the input bits $x_1, \ldots, x_n$ to new bits $\tilde{x}_1, \ldots, \tilde{x}_n$ by setting all of the input bits for processors in group $g_1$ to be 1 (the other inputs remain unchanged). We leave the sets $\{z_i^j\}$ unchanged. Using the input bits $\tilde{x}_1, \ldots, \tilde{x}_n$ and the sets $z_i^j$, we derive sets $Z_i^j, \tilde{S}_i^j$ corresponding to an $E$-round lockstep execution class $\tilde{\mathcal{C}}$. (Note that the sets $Z_i^j$ are unchanged, because they only depend on the $z_i^j$'s and not on the input bits.) We output $\tilde{\mathcal{C}}$ as the next $E$-round lockstep execution class in the chain.
If $g_1 = n/t$, we terminate. Otherwise, we define the new list $\tilde{\mathcal{L}}$ to be $\{(g_1 + 1, 1)\}$, and we call \textbf{ChainGenerator($(\tilde{x}_1, \ldots, \tilde{x}_n), \{z_i^j\}, \tilde{\mathcal{L}}$)}.

We now consider case 2. In this case, there is some $j \neq g_1$, $1 \leq i \leq E$ such that $z_i^j$ \emph{does} contain group $g_1$. Among these $i,j$ values, we fix a pair $(i^*, j^*)$ where $i^*$ is minimal. We define $r_{2} = i^* + 1$. We note that $r_{2} > r_1 =1$. We define $g_{2}=j^*$. We append the pair $(g_{2}, r_{2})$ to the list $\mathcal{L}$ to form the new list $\mathcal{L}'$. We then call \textbf{ChainGenerator($(x_1, \ldots, x_n),\{z_i^j\}, \mathcal{L'}$)}. Note that in this case, everything except the list is unchanged, and $\mathcal{L'}$ satisfies our requirements by construction.
This concludes the description of the algorithm.

\subsubsection{Proof of Correctness for the Algorithm} We now prove that this algorithm produces a chain of $E$-round lockstep execution classes with the desired properties 1 through 4.

\begin{lemma}\label{lem:generator} When called with the initial arguments $(0,0,\ldots,0)$, the sets $\{z_i^j\}$ for $\mathcal{C}_0$, and $\mathcal{L} := \{(1, 1)\}$, the function \textbf{ChainGenerator} eventually terminates and produces a chain of $E$-round lockstep execution classes satisfying properties 1 through 4 listed above.
\end{lemma}

\begin{proof} Property 1 is satisfied by construction. Property 3 follows from the fact that $\mathcal{C}_0$ is taken from a real execution in which the input bits of all good processors are 0; a good processor in such an execution deciding on the value 1 would violate the correctness conditions of the protocol.

To prove property 2, we consider how adjacent classes in the chain are generated. A new class is produced when exactly one of two things happen: either the sets $z_i^j$ change, or the input change. When the sets $z_i^j$ change, it is actually only one set $z_{r_\ell-1}$ that changes. This means that all of the sets $Z_i^j$ for $i < r_\ell-1$ are exactly the same for the two adjacent classes. In fact, the sets $Z_i^j$ are exactly the same for all $i \leq E$ for all $j \neq g_\ell$, since we have made sure that there are no sender, round number pairs with senders in group $g_\ell$ and round numbers $\geq r_\ell$ in any of these $Z_i^j$'s. Since the input bits are unchanged, this means that all of the sets $S_i^j$ for $j \neq g_\ell$ will be the same for the two adjacent classes. If instead it is the input bits of a group $j'$ that have changed between the adjacent classes, then the sets $z_i^j$ and $Z_i^j$ are the same, and none of the $Z_i^j$'s for $j \neq j'$ include any senders from group $j'$. Thus, the change in this group's inputs does not affect the behavior of processors outside the group through the first $E$ rounds, and we have sets $S_i^j$ which are identical for all $i$ whenever $j \neq j'$. This establishes property 2.

Now we must prove termination and property 4. First, we note that termination implies property 4. To see this, consider the termination condition. Termination occurs precisely when the list $\mathcal{L}$ is equal to $\{(n/t, 1)\}$ and the final $E$-round lockstep execution class produced (call this $\mathcal{C}_L$) has input values of 1 for all the members of the final group. To reach this point, the algorithm must have gone through calls where the list was of the form $\{(j,1)\}$ for each $j$ from 1 to $n/t$. The only way for the algorithm to get from a list of $\{(j,1)\}$ to a list of $\{(j+1,1)\}$ is to produce a sequence of intermediary $E$-round lockstep execution classes which begin with the input bits of group $j$ all being 0 and ends with these input bits all being 1. Once these inputs are changed to 1, they are never changed back. Thus, if the protocol terminates after producing $\mathcal{C}_L$, then $\mathcal{C}_L$ must correspond to an execution in which all of the input bits of good processors are equal to 1. Thus, termination implies property 4.

Finally, we prove the algorithm terminates. We consider the way the list $\mathcal{L}$ evolves as the algorithm runs. When a pair is added to the list, its round number is always strictly greater than the round number of the previous list element. These round numbers will never exceed $E+1$. To verify this, note that when the round number of the last list entry $(g_k, r_k)$ is $r_k=E+1$, the condition that no sets $z_i^j$ with $j\neq g_k$ and $i \geq E+1$ include group $g_k$ is trivially satisfied, since we end at round $E$. Thus, the list cannot grow at this point, and instead we are guaranteed to remove its last element. We have thus shown that pairs with round number $E+1$ are always guaranteed to be removed from the list.

We now employ induction on the round number. Suppose that at some point during the running of the algorithm, we have a list whose last pair has round number $r$, and all pairs with round numbers strictly higher than $r$ are guaranteed to be eventually removed from the list. One of two things can happen next: either we will remove the last pair with round number $r$, or we will add a new pair to the list with round number greater than $r$. The round number of this new pair, which we will call $r'$, is chosen to be minimal. We let $g_{\ell}$ denote the group number of this new pair. By the inductive hypothesis, we know that this new pair will eventually be removed from the list (since $r' >r$). From the time that we added the pair with round number $r'$ to the point when we remove it, all of the sets $z_i^j$  for rounds $ i < r'-1$ remain unchanged, and for $i = r'-1$, the only set that changes is $z_{r'-1}^{g_\ell}$.

Next, we will either remove the pair with round number $r$ from the list, or we will add a new pair with the new minimal round number, $r''$. Since we have left the sets $z_i^j$ for all $j$ and all $i < r'-1$ unchanged, this new minimal round number must satisfy $r''\geq r'$, and if $r'' = r'$, the new group number cannot be equal to $g_{\ell}$, since we have already fixed the set $z_{r'-1}^{g_\ell}$. As we continue adding and removing new pairs with round numbers $\geq r'$, this fix will not be undone. Hence, since there are a finite number of groups, we will eventually progress to a point where the new minimal round number is $> r'$. This minimal round number will continue increasing upward, but it cannot exceed $E+1$. This means that at some point, we will remove the pair with round number $r$.

We may conclude that all pairs added to the list are eventually removed. Applying this to the pairs which are added with round number equal to 1, we see that the inputs for each group will eventually be changed from 0 to 1, which guarantees that the process will terminate, with all inputs equal to 1.
\end{proof}

\subsection{Completing the Proof of Theorem \ref{thm:main}}
We consider our chain of $E$-round lockstep execution classes $\mathcal{C}_0, \ldots, \mathcal{C}_L$ satisfying properties 1 through 4. Among these, there is some $\mathcal{C}_{\ell^*}$ which results in some good processors remaining undecided after $E$ rounds. We now use Lemma \ref{lem:probability} to complete our proof. We recall that this lemma shows that when the good processors in a group each sample their next message independently from the same distribution $\mathcal{D}$ on at most $R$ possibilities, with high probability the adversary can choose the ``random" bits of the faulty processors in the group to ensure that the number of times each possible message is chosen within the group exactly matches the expected number under distribution $\widetilde{\mathcal{D}}$.

We let $\delta'$ denote the value $\delta c^3/(3(1-c))$ appearing in the statement of Lemma \ref{lem:probability}. We note that $\delta'$ is a positive constant which can be chosen to depend only on $R$ and $c$ (recall the $\epsilon$ is chosen with respect to $R$ and $c$). We consider an execution which begins with the same input bits as $\mathcal{C}_{\ell^*}$. As the execution runs, the adversary will choose the message scheduling and the supposedly random bits for the faulty processors in an attempt to create message sets through the first $E$ rounds that match the sets $S_i^j$ associated to $\mathcal{C}_{\ell^*}$. We note that the scheduling can be chosen according to permutations $\pi_{p,i}$ for each processor $p$ and each round $i$ derived from the sets $Z_i^j$ for the class $\mathcal{C}_{\ell^*}$ as described in Definition \ref{def:permutations}. More precisely, in each round $i$, we run (or continue running) parts of the finite schedules $\sigma_{\pi_{p,i'}}$ for all $i' \leq i$ and stop when exactly those processors in groups $j$ with $(p, i') \in Z_i^j$ have accepted each message.

In order for the adversary to be successful in creating an execution that falls into class $C_{\ell^*}$, it must ensure that the messages chosen in each round by each group conform precisely to the expected numbers for each possibility under the corresponding distribution $\widetilde{\mathcal{D}}$. This can be done as long as the number of good processors in the group choosing each possibility $s$ do not exceed the expected number, $\tilde{\rho_s}t$. When this occurs, the adversary can set the messages of the faulty processors in the group so that each expectation is matched precisely. We note that the sets $S_i^j$ always contain \emph{all} of the round $k$ messages sent by a group or none of them (recall we have required that if $(p,k) \in Z_i^j$ for any processor $p$, any rounds $i,k$, and any group $j$, then $(p',k) \in Z_i^j$ for all processors $p'$ in the same group as $p$). Thus, as long as the adversary achieves the desired multi-set of messages for each group, the sets $S_i^j$ of $\mathcal{C}_{\ell^*}$ will be attained. (It does not matter which processor from each group sends which message, as long as the multi-set of messages produced by each group matches the specification of $\mathcal{C}_{\ell^*}$.)

Since there are at most $R$ possible messages for each group and there are $n/t = 1/c$ groups, the union bound in combination with Lemma \ref{lem:probability} ensures that the probability of the adversary failing in any given round is at most $\frac{R}{c} e^{-\delta' n}$. Thus, the adversary will succeed in producing the sets $S_i^j$ associated with $\mathcal{C}_{\ell^*}$ through $E$ rounds with probability at least $1 - \frac{ER}{c} e^{-\delta'n}$. When the adversary succeeds, some good processors will remain undecided at the end of $E$ rounds.

We now fix the value of $E$ as:
\[ E := \frac{c}{2R} e^{\delta'n}.\]
This is exponential in $n$, and the probability that the adversary can force the execution to last for at least $E$ rounds is $\geq \frac{1}{2}$. This proves that the expected running time is exponential.
This completes our proof of Theorem \ref{thm:main}.
\end{proof}

\section{Directions for Future Work}
We have proven that for any fully symmetric round protocol, there are some input values and an adversarial strategy that will force the execution to run for an exponential number of rounds with constant probability. This results in an exponential expected running time in general for values of $t$ which are linear in $n$.
Our work leaves many interesting open questions and illuminates several potential directions for future work on understanding the range of possible behaviors for randomized Byzantine agreement algorithms in the asynchronous, full information setting. We hope that the restrictions we placed on fully symmetric round protocols in order to implement our proof strategy may provide useful clues for where one should look when searching for polynomial expected time algorithms (particularly Las Vegas algorithms). Informally speaking, we may ask: how far does one have to go beyond the realm of fully symmetric round protocols in order to find an expected polynomial time algorithm? Does one have to abandon symmetry completely? Or might one deviate from our specifications in more subtle ways?

\paragraph{Weaker Symmetry} For instance, we could consider an enlarged class of protocols that is symmetric in a weaker sense: behavior could still be invariant under permutations of the processor identities attached to accepted messages, but these permutations could be fixed for the entire history of previous rounds, instead of allowed to change per round. We do not know whether our impossibility result can be extended to protocols exhibiting this weaker kind of symmetry.

\paragraph{More Randomness} It is also intriguing to consider the small change of lifting the restriction on the number of random choices. Though our probabilistic analysis is not nearly optimized, it does seem fairly sensitive to the number of possibilities considered when a processor makes a random choice. Having more choices will considerably decrease the adversary's chances of arranging the numbers of all outcomes to conform with their adjusted expectations. However, it is not clear how to leverage using more randomness to achieve faster Las Vegas algorithms.

\paragraph{Use of Round Structure and Primitives}
It is also worth considering if the seemingly natural notion of a round (imported from the synchronous setting) may have a restrictive effect on our thinking in the asynchronous setting. Chor and Dwork describe the nature of an asynchronous round as follows: ``...there is something akin to a round even in a completely asynchronous system. Consider a set of $n$ processors running a protocol tolerant to $t$ faults, and let $p$ be a correct processor in this set. If $p$ broadcasts the message for its $i^{th}$ step in the protocol and receives step $i$ messages from only $n-t$ processors, then $p$ cannot safely wait for additional step $i$ messages because all $t$ processors from which it has not heard may be faulty. In this case, $p$ must proceed to step $i+1$ in its protocol, and we say $p$ has completed round $i$" \cite{CD89}. This reasoning is convincing, but also a bit deceptive. For protocols like Ben-Or's \cite{BO83} and Bracha's \cite{B84} which consult only the current round's messages, this reasoning is sound, but for protocols which may use more of the execution history, the situation is more subtle. For example, consider a processor who is in round 2 and has received round 2 messages from processors 1 through $n-t$. Suppose it previously received round 1 messages from processors $t+1$ through $n$. Then there are a total of $2t$ processors that it has failed to hear from so far, so it may safely wait to receive $t$ more messages for rounds 1 and 2 combined before moving to round 3. We have allowed for this sort of behavior in our validation primitive, but there could be other subtle violations of our restrictive notion of round behavior that could allow protocols to avoid our impossibility result.

This issue is related to our requirements for the Broadcast and Validate primitives. It is possible that one might leverage instantiations of these primitives with stronger properties or employ wholly new primitives to avoid our result without acquiring considerably more complexity in the high-level algorithm.

\bibliographystyle{plain}
\bibliography{BA}

\appendix
\section{Proof of Lemma \ref{lem:probability}} \label{app:chernoff}
\begin{proof} We first consider $s \neq s^*$. In this case, $\rho_s \leq \tilde{\rho_s}$, and $\tilde{\rho_s} \geq \epsilon$. We define new independent random variables $\widetilde{X}_1, \ldots, \widetilde{X}_{(1-c)t}$ which are equal to 1 with probability $\tilde{\rho_s}$ and equal to 0 with probability $1 - \tilde{\rho_s}$. Since $\rho_s \leq \tilde{\rho_s}$, we have that:
\[\mathbb{P}\left[ \sum_{i=1}^{(1-c)t} X_i \geq \tilde{\rho_s} t \right] \leq \mathbb{P}\left[\sum_{i=1}^{(1-c)t} \widetilde{X}_i \geq \tilde{\rho_s} t\right].\]

We note that
\[\tilde{\rho_s} t = \tilde{\rho_s} (1-c)t \left(\frac{1}{1-c}\right) = \left(\frac{1}{1-c}\right) \mathbb{E}\left[ \sum_{i=1}^{(1-c)t} \widetilde{X}_i\right].\] Since $\frac{1}{1-c}  = 1 +\frac{c}{1-c}$ and $\frac{c}{1-c} <1$, the Chernoff bound yields:
\[\mathbb{P}\left[ \sum_{i=1}^{(1-c)t} X_i \geq \tilde{\rho_s} t \right] \leq e^{-\tilde{\rho_s} c^2 t/(3(1-c))} \leq e^{- \epsilon c^3n/(3(1-c))}.\]

We now consider $s^*$. Then $\rho_{s^*} \geq \frac{1}{R}$, and $\widetilde{\rho}_{s^*} \geq \rho_{s^*} - R(\epsilon+ \frac{1}{t})$. We then have:
\[\mathbb{P}\left[ \sum_{i=1}^{(1-c)t} X_i \geq \tilde{\rho}_{s^*} t \right] \leq \mathbb{P} \left[\sum_{i=1}^{(1-c)t} X_i \geq \left(\rho_{s^*} - R\left(\epsilon + \frac{1}{t}\right)\right) t\right] .\]

We can rewrite $\left(\rho_{s^*} - R\left(\epsilon + \frac{1}{t}\right)\right) t$ as
\[\rho_{s^*}(1-c)t\left(\frac{1}{1-c}\right)\left(1- \frac{R(\epsilon +1/t)}{\rho_{s^*}}\right).\] Since $\rho_{s^*} \geq \frac{1}{R}$, this quantity is:
\[\geq \left(\frac{1}{1-c}\right) \left( 1 - R^2(\epsilon + 1/t)\right) \mathbb{E}\left[ \sum_{i=1}^{(1-c)t} X_i\right].\]
Since $\epsilon$ was chosen so that $\epsilon + \frac{1}{t} < \frac{c}{2R^2}$, we have
\[\left(\frac{1}{1-c}\right) \left( 1 - R^2(\epsilon + 1/t)\right) \geq \left(\frac{1}{1-c}\right) \left(1 - \frac{c}{2}\right) = 1+ \frac{c}{2(1-c)}.\]
Hence, by the Chernoff bound (since $0 < \frac{c}{2(1-c)} \leq 1$), we have:
\[\mathbb{P}\left[ \sum_{i=1}^{(1-c)t} X_i \geq \tilde{\rho}_{s^*} t \right] \leq e^{-(1-c)cn/3\cdot \frac{c^2}{4R(1-c)^2}} = e^{-c^3 n/(12R(1-c))}.\]

Since $\delta := \min \{\epsilon, \frac{1}{4R}\}$, we have shown that
\[\mathbb{P}\left[ \sum_{i=1}^{(1-c)t} X_i \geq \tilde{\rho}{_s}t\right] \leq e^{-\delta  c^3 n/(3(1-c))}\]
holds in all cases.
\end{proof}

\end{document}